\newtheorem{theorem}{Theorem}[section]
\newtheorem{lemma}[theorem]{Lemma}
\newenvironment{proof}[1][Proof]{\begin{trivlist}
\item[\hskip \labelsep {\bfseries #1}]}{\end{trivlist}}
\newenvironment{definition}[1][Definition]{\begin{trivlist}
\item[\hskip \labelsep {\bfseries #1}]}{\end{trivlist}}
 \newcommand{\qed}{\hfill \ensuremath{\Box}}      
\begin{document}
\title{Bayesian hierarchical modeling of simply connected 2D shapes}
\author{Kelvin Gu, Debdeep Pati and David B. Dunson  \\
Department of Statistical Science, \\
 Duke University, NC 27708  \\
email: \texttt{gu.kelvin@gmail.com},  \texttt{debdeep.pati@stat.duke.edu}, \texttt{dunson@stat.duke.edu}
}
\maketitle

\small
\begin{center}
\textbf{Abstract}
\end{center}
Models for distributions of shapes contained within images can be widely used in biomedical applications
ranging from tumor tracking for targeted radiation therapy to classifying cells in a blood sample. Our focus is 
on hierarchical probability models for the shape and size of simply connected 2D closed curves, avoiding the 
need to specify landmarks through modeling the entire curve while borrowing information across curves for 
related objects.  Prevalent approaches follow a fundamentally different strategy in providing an initial point 
estimate of the curve and/or locations of landmarks, which are then fed into subsequent statistical analyses.  
Such two-stage methods ignore uncertainty in the first stage, and do not allow borrowing of information across 
objects in estimating object shapes and sizes.  Our fully Bayesian hierarchical model is based on 
multiscale deformations within a linear combination of cyclic basis characterization, which facilitates automatic alignment of the 
different curves accounting for uncertainty.  The characterization is shown to be highly flexible in representing 
2D closed curves, leading to a nonparametric Bayesian prior with large support.  Efficient Markov chain Monte
Carlo methods are developed for simultaneous analysis of many objects.  The methods are evaluated through
simulation examples and applied to yeast cell imaging data.
\vspace*{.3in}

\noindent\textsc{Keywords}: {Bayesian nonparametrics, cyclic basis, deformation, hierarchical modeling, image cytometry, multiscale, 2d shapes}

\newpage

\section{Introduction}
Collections of shapes are widely studied across many disciplines, such as biomedical imaging, cytology and computer vision. Perhaps the most fundamental issue when studying shape is the choice of representation.  The simplest representations for shape are basic geometric objects, such as ellipses  \cite{cinquin1982hip,amenta1998new,rossi2003reconstruction}, polygons \cite{malladi1994evolutionary,malladi1995shape,sederberg19932,sato1997object}, and slightly more involved specifications such as superellipsoids \cite{gong2004parametric}.

Clearly, not all shapes can be adequately characterized by simple geometric objects. The landmark-based approach was developed to describe more complex shapes by reducing them to a finite set of landmark coordinates. This is appealing because the joint distribution of these landmarks is tractable to analyze, and because landmarks make registration/alignment of different shapes straightforward.  There is a very rich statistical literature on parametric joint distributions for multiple landmarks \cite{bookstein1986size,bookstein1996standard,bookstein1996shape,bookstein1996landmark,dryden1998statistical,dryden1993multivariate,mardia1989statistical,dryden2001surface,zheng2010automatic}, with some recent work on nonparametric distributions, both frequentist \cite{kume2007shape,kent2001functional,bhattacharya2008statistical,bhattacharya2009statistics} and Bayesian \cite{bhattacharya2010strong,bhattacharya2010nonparametric,bhattacharya2011nonparametric}.  

Unfortunately, in many applications it is not possible to define landmarks if the target collection of objects vary greatly. Furthermore, even if landmarks can be chosen, there may be substantial uncertainty in estimating their location, which is not accounted for in landmark-based statistical analyses.

In these situations, one can instead characterize shapes by describing their boundary, using a nonparametric curve (2D) or surface (3D). Curves and surfaces are widely used in biomedical imaging and commercial computer-aided design \cite{barnhill1985surfaces,lang1992developable,hagen1992variational,aziz2002bezier}, because they provide a flexible model for a broad range of objects e.g., cells, pollen grains, protein molecules, machine parts, etc.  
%
A collection of introductory work on curve and surface modeling can be found in \citeasnoun{su1989computational} and subsequent developments in \citeasnoun{muller2005surface}. Popular representations include: Bezier curves, splines, and principal curves \cite{hastie1989principal} (a nonlinear generalization of principal components, involving smooth curves which `pass through the middle' of a data cloud). \citeasnoun{anujcurve1} and \citeasnoun{anujcurve2} dealt with curve modeling based on smooth stochastic processes. Although there is a hugely vast literature on estimating curves
and surfaces, most of the focus is on estimating $\mu: \mathcal{X} \to \mathbb{R}$, where $\mathcal{X}$ a 
compact subset of $\mathbb{R}^p$ without making any constraints on $\mu$.  Estimating a closed surface or a curve involves a different modeling strategy and there has been very few works in this regime, particularly from a Bayesian point of view. To our knowledge, only \citeasnoun{pati2011surface} developed a Bayesian approach for fitting a closed surface using tensor-products.

Many of the above curve representations can successfully fit and describe complex shape boundaries, but they often have high or infinite dimensionality, and it is not clear how to directly analyze them. Also, they were not designed to facilitate comparison between shapes or characterize a collection of shapes.
One solution is to re-express each curve using Fourier descriptors or wavelet descriptors \cite{whitney1937regular,zahn1972fourier,mortenson1985geometrie,persoon1977shape}. Both approaches decompose a curve into components of different scales, so that the coarsest scale components carry the global approximation information while the finer scale components contain the local detailed information. Such multiscale transforms make it easier to compare objects that share the same coarse shape, but differ on finer details, or vice versa. The finer scale components can also be discarded to yield a finite and low-dimensional representation. Other dimensionality-reducing transformations include Principal Component Analysis and Distance Weighted Discrimination.


Note that the entire process is fragmented into three separate tasks: 1) curve fitting, 2) transformation, 3) population-level analysis. This can be problematic for several reasons. First, curve-fitting is not always accurate. If uncertainty is not accounted for, mistakes made during curve-fitting will be propagated into later analyses. Second, dimension-reducing transformations may throw away some of the information captured by curve-fitting. Finally, one suspects that the curve-fitting and transformation steps should be able to benefit from higher-level observations made during subsequent population analysis. For example, if the curve-fitting procedure is struggling to fit a missing or noisy shape boundary,  it should be able to draw on similar shapes in the population to achieve a more informed fit.
In this paper, we propose a Bayesian hierarchical model for 2D shapes, which addresses all of the aforementioned problems by performing curve fitting, multiscale transformation, and population analysis simultaneously within a single joint model.

The key innovation in our shape model is a shape-generating random process which can produce the whole range of simply-connected 2D shapes (shapes which contain no holes), by applying a sequence of multiscale deformations to a novel type of closed curve based on the work of \citeasnoun{róth2009cyclic}. \citeasnoun{mokhtarian1992theory}, \citeasnoun{dŽsidŽri2004multilevel} and \citeasnoun{dŽsidŽri2007nested} also proposed multiscale curves (with the latter two being more similar to our work, in their usage of B\'{e}zier curves and degree-elevation). However, none of these developed a statistical model around their representation or considered a collection of shapes.  In analyzing a population of shapes, a notion of average shape or mean shape is quite important.  \citeasnoun{dryden1998statistical} discussed notions of mean shape and shape variability and various methods of estimating them pertaining to landmark based analysis. We will follow a different but related strategy for defining the average shape in terms of the basis coefficients or the control points of the B\'{e}zier curves. We call it the `central shape'. Refer to \S \ref{ssec:rsp} for details.  To characterize shape variability, we also define a notion of shape quantile in \S \ref{ssec:rsp}.

In \S \ref{sec:rsp}, we describe the shape-generating random process, how it specifies a  multiscale probability distribution over shapes, and how this can be used to express various modeling assumptions, such as symmetry.  In \S \ref{sec:theory}, we provide theory regarding the flexibility of our model (support of the prior). In \S \ref{sec:modelprior} and \S \ref{sec:imagefit}, we show how the random process can be used to fit a curve to a point cloud or an image. In \S \ref{sec:pop}, we show how to simultaneously fit and characterize a collection of shapes, which also naturally incorporates inter-shape alignment. In \S \ref{sec:postcomp}, we describe the computational details of Bayesian inference behind each of the tasks described earlier. This results in a fast approximate algorithm which is scalable to a huge collection of shapes having a dense point cloud each. Finally, in \S \ref{sec:simstudy} and  \S \ref{sec:realstudy}, we test our model on simulated shapes and real image data respectively.
%
%
En route, we solve several important sub-problems that may be generally useful in the study of curve and surface fitting. First, we develop a model-based approach for parameterizing point cloud data. Second, we show how fully Bayesian joint modeling can be used to incorporate several pieces of auxiliary information in the process of curve-fitting, such as when each point within a point cloud also reports a surface orientation. Lastly, the concept of multi-scale deformation can be generalized to 3d surfaces in a straightforward manner.

\section{Priors for Multiscale Closed Curves}\label{sec:rsp}

\subsection{Overview}
Our random shape generation process starts with a closed curve and performs a sequence of multiscale deformations to generate a final shape. In \S \ref{ssec:rothcurve}, we introduce the Roth curve developed by \citeasnoun{róth2009cyclic}, which is used to represent the shape boundary. Then, in \S \ref{ssec:deform}, we demonstrate how to deform a Roth curve at multiple scales to produce any simply-connected shape. Using the mechanisms developed in \S \ref{ssec:rothcurve} and \S \ref{ssec:deform}, we present the full random shape process in \S \ref{ssec:rsp}.

\subsection{Roth curve}\label{ssec:rothcurve} 

A Roth curve is a closed parametric curve, $C: [-\pi,\pi] \rightarrow \mathbb{R}^2$, defined by a set of $2n+1$ control points $\{c_j, j=1,\ldots,2n+1\}$, where $n$ is the ÒdegreeÓ of the curve and we may choose it to be any positive integer, depending on how many control points are desired. For convenience, we will refer to the total number of control points as $J$, where $J(n) = 2n+1$. For notational simplicity, we will drop the dependence of $n$ in $J(n)$. As a function of $t$, the curve can be viewed as the trajectory of a particle over time. At every time $t$, the particle's location is defined as some convex combination of all control points. The weight accorded to each control point in this convex combination varies with time according to a set of basis functions, $\{B^n_j(t), j=1,\ldots,J\}$, where $B^n_j(t) > 0$ and $\sum_{j=1}^J B^n_j(t) = 1$ for all $t$.%
\begin{eqnarray}\label{eq:rothcurve}
C(t) &=& \sum_{j=1}^{J} c_j B_{j}^n(t), \enspace t \in [-\pi,\pi] \enspace, \\
B_{j}^n(t) &=& \frac{h_n}{2^n}\left\{1+ \cos\bigg(t + \frac{2\pi (j-1) } {2n+1}\bigg)\right\}^n, \enspace h_n = \frac{(2^n n!)^2}{(2n+1)!} \enspace,
\end{eqnarray}
where $c_j = [c_{j,x} \enspace c_{j,y}]'$ specifies the location of the $j^{th}$ control point and $B^n_j: [-\pi,\pi] \rightarrow [0,1]$ is the $j^{th}$ basis function. For simplicity, we omit the superscript $n$ denoting a basis function's degree, unless it requires special attention. This representation is a type of Bezier curve.  Refer to Figure \ref{fig:basis} for an illustration of the Roth basis functions. 
\begin{figure}
\begin{center}
\begin{tabular}{cc}
\includegraphics[width=0.55\textwidth]{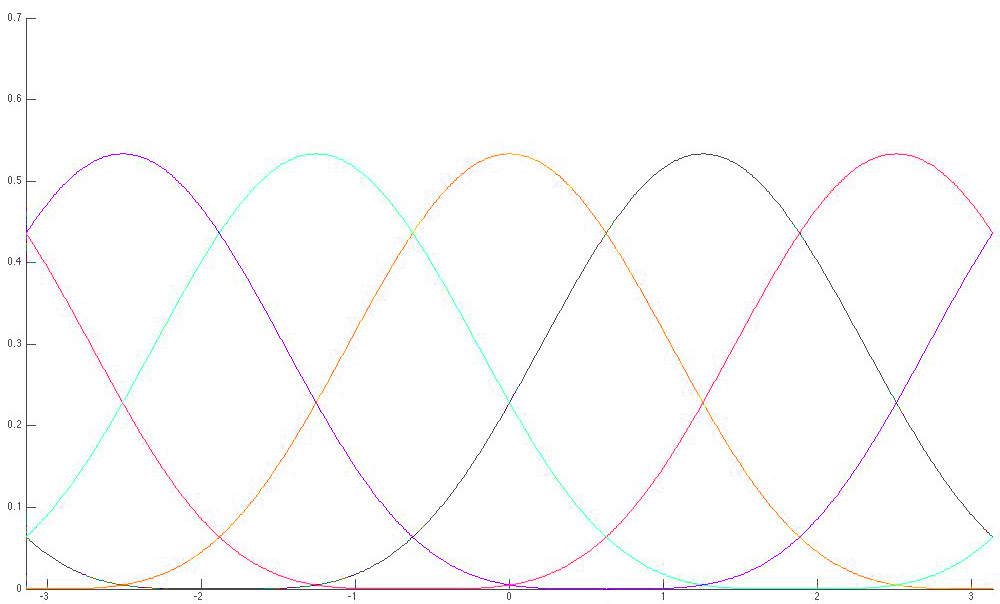}
\end{tabular}
\end{center} 
\caption{Roth basis}
\label{fig:basis}
\end{figure}
\noindent
The Roth curve has several appealing properties:
\begin{enumerate}
\item It is fully defined by a finite set of control points, despite being an infinite dimensional curve. 
\item It is always closed, i.e. $C(-\pi)=C(\pi)$. This is necessary to represent the boundary of a shape.
\item All basis functions are nonlinear translates of each other, and are evenly spaced over the interval $[-\pi,\pi]$. They can be cyclically permuted without altering the curve. This implies that each control point exerts the same `influence' over the curve. The influence of the control points is illustrated in \S \ref{ssec:influence}.
\item A degree 1 Roth curve having 3 control points is always a circle or ellipse.
\item Any closed curve can be approximated arbitrarily well by a Roth curve, for some large degree $n$. This is because the Roth basis, for a given $n$, spans the vector space of trigonometric polynomials of degree $n$ and as $n \rightarrow \infty$, the basis functions span the vector space of Fourier series. We elaborate on this in \S \ref{sec:theory}.
\item Roth curves are infinitely smooth in the sense that they are infinitely differentiable ($C^{\infty}$).
\end{enumerate}

\subsection{Deforming a Roth curve}\label{ssec:deform}

A Roth curve can be deformed simply by translating some of its control points. We now formally define deformation and illustrate it in Figure \ref{fig:deform}.
\begin{figure}
\begin{center}
\begin{tabular}{cc}
\includegraphics[width=0.5\textwidth]{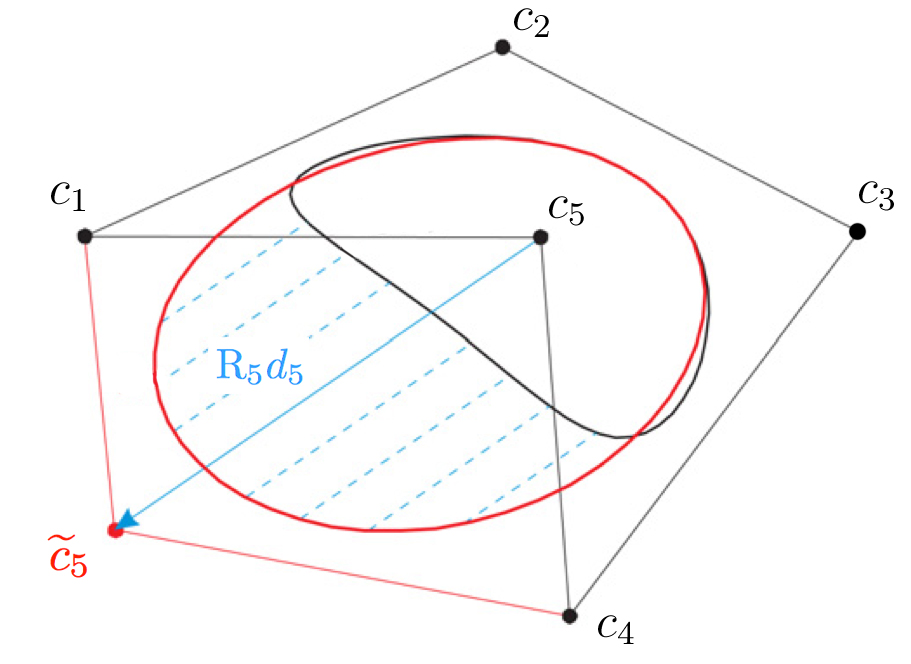}
\end{tabular}
\end{center} 
\caption{Deformation of a Roth curve}
\label{fig:deform}
\end{figure}
\begin{definition}
Suppose we are given two Roth curves,
\begin{eqnarray}
C(t) = \sum_{j=1}^{J}c_{j}B_{j}(t) , \quad
\widetilde{C}(t) = \sum_{j=1}^{J} \widetilde{c}_{j} B_{j}(t),
\end{eqnarray}
where for each $j$, $\widetilde{c}_{j} = c_j+ R_j d_j$, $d_j \in \mathbb{R}^2$ and $R_j$ is a rotation matrix. Then, we say that $C(t)$ is {\em deformed} into $\widetilde{C}(t)$ by the {\em deformation vectors} $\{d_j, j=1,\ldots,J\}$.
\end{definition}

Each $R_j$ orients the deformation vector $d_j$ relative to the original curve's surface. As a result, positive values for the y-component of $d_j$ always correspond to outward deformation, negative values always correspond to inward deformation, and $d_j$'s x-component corresponds to deformation parallel to the surface. We will call $R_j$ a {\it deformation-orienting matrix}. In precise terms,
\begin{eqnarray}\label{eq:rj}
R_j &=&
\begin{bmatrix}
\cos(\theta_j) & -\sin(\theta_j)\\ 
\sin(\theta_j) & \cos(\theta_j)
\end{bmatrix}
\end{eqnarray}
where $\theta_j$ is the angle of the curve's tangent line at $q_j = \frac{-2 \pi (j-1) }{2n + 1}$, the point where the control point $c_j$ has the strongest influence: $q_j = \arg\displaystyle\max_{t \in [a,b]} B_j(t)$. $\theta_j$ can be obtained by computing the first-derivative of the Roth curve, also known as its hodograph.
\begin{definition}
The hodograph of a Roth curve is given by:
\begin{eqnarray}\label{eq:hodo}
H(t) &=& \sum_{j=1}^{J}c_j\frac{d}{dt}B_{j}(t), \\
\frac{d}{dt}B_{j}(t) &=& -\frac{2}{(2n+1){2n \choose n}}\sum_{j=1}^{J} c_j \sum_{k=0}^{n-1}{2n \choose k} (n-k) \sin \bigg( (n-k)t+ 
\frac{2(n-k)(j-1)\pi}{2n+1}\bigg),
\end{eqnarray}
where $t \in [-\pi, \pi]$. If we view $C(t)$ as the trajectory of a particle, $H(t)$ intuitively gives the velocity of the particle at point $t$. 
\end{definition}
We can now use simple trigonometry to determine that
\begin{eqnarray}\label{eq:arctan}
\theta_j  = \arctan \left( \frac{H_y(q_j)}{H_x(q_j)} \right) .
\end{eqnarray}
Note that $R_j$ is ultimately just a function of $\{ c_{j} \in \mathbb{R}^2, j=1,\ldots, J \}$.

Next, we show how to alter the scale of deformation, using an important concept called degree elevation.
\begin{definition}
Given any Roth curve, we can use degree elevation to re-express the same curve using a larger number of control points (a higher degree). More precisely, if we are given a curve of degree $n$, $C(t) = \sum_{j=1}^{2n+1}c_{j}B_{j}^{n}(t)$, we can elevate its degree by any positive integer $v$, to obtain a new degree elevated curve: $\widehat{C}(t) = \sum_{j=1}^{2(n+v)+1}\widehat{c}_{j}B_{j}^{n+v}(t)$ such that $C(t)=\widehat{C}(t)$ for all $t \in [-\pi, \pi]$. In $\widehat{C}(t)$, each new degree-elevated control point, $\widehat{c}_{j}$, can be defined in terms of the original control points, $\{c_i, i=1,\ldots, 2n+1\}$:
\begin{eqnarray*}
\widehat{c}_j :=
\frac{1}{2n+1}\sum_{i=1}^{2n+1}c_i + \frac{{2(n+v)\choose n+v}h_n}{2^{2n-1}} \sum_{k=0}^{n-1} \frac{{2n\choose k}}{{2(n+v) \choose v+k}} \sum_{i=1}^{2n+1}\cos \left( (n-k)\left(\frac{-2(j-1)\pi}{2(n+v) +1}\right) + \frac{2(n-k)(i-1)\pi}{2n+1}\right) c_i \enspace .
\end{eqnarray*}
\end{definition}

It is crucial to note that the `influence' of a single control point shrinks after degree elevation. We quantify this intuition in \S \ref{ssec:influence}. This is because the curve is now shared by a greater total number of control points. This implies that after degree-elevation, the translation of any single control point will cause a smaller, finer-scale deformation to the curve's shape. Thus, degree elevation can be used to adjust the scale of deformation. We exploit this strategy in the random shape process proposed in \S \ref{ssec:rsp}.

To that end, we first rewrite all of the concepts described above in more compact vector notation. Note that the formulas for degree elevation, deformation, the hodograph and the curve itself all simply involve linear operations on the control points.

\subsection{Vector notation}
First, we rewrite the control points in a `stacked' vector of length $2J$.
\begin{eqnarray}
c= (c_{1,x},c_{1,y}, c_{2,x}, c_{2, y}, \ldots, c_{J,x}, c_{J,y})'. 
\end{eqnarray}
The formula for a Roth curve given in (\ref{eq:rothcurve}) can be rewritten as:
\begin{eqnarray}
C(t) &=& X(t) c \\
X(t) &=& \begin{bmatrix}
B_1(t) & 0 & B_2(t) & 0 & \cdots & B_{J}(t) & 0 \\
0 & B_1(t) & 0 & B_2(t) & \cdots & 0 & B_{J}(t)
\end{bmatrix}
\end{eqnarray}
The formula for the hodograph given in (\ref{eq:hodo}) is rewritten as:
\begin{eqnarray}\label{eq:xmatrix}
H(t) = \dot{X(t)} c, \quad 
\dot{X}(t) = \frac{d}{dt} X(t)
\end{eqnarray}
Deformation can be written as:
\begin{eqnarray}
\widetilde{c} = c + T(c) d, \quad
d = (d_{1,x},d_{1,y}, d_{2,x}, d_{2, y}, \ldots, d_{J,x}, d_{J,y})',  \quad T(c) = \mbox{block}(R_1, R_2, \ldots, R_J)
\end{eqnarray}
where $\mbox{block}(A_1, \ldots, A_q)$ is a $pq \times pq$ block diagonal matrix using $p\times p$ matrices $A_i, i=1, \ldots, q$. %
We call $T$ the {\it stacked} deformation-orientating matrix. Note that $T$ is a function of $c$, because each $R_j$ depends on $c$.
Degree elevation can be written as the linear operator, $E$:
\begin{eqnarray*}
\widehat{c} = E c, \quad E =  (E_{i,j})_{i=1, j=1}^{n+v,n}.
\end{eqnarray*}
where 
\begin{eqnarray*}
E_{i,j} &=& \frac{1}{2n+1} + \frac{{2(n+v)\choose n+v}h_n}{2^{2n-1}} \sum_{k=0}^{n-1} \frac{{2n\choose k}}{{2(n+v) \choose v+k}} \cos \left( (n-k)\left(\frac{-2(i-1)\pi}{2(n+v) +1}\right) + \frac{2(n-k)(j-1)\pi}{2n+1}\right). 
\end{eqnarray*}
We will maintain this vector notation throughout the rest of the paper.


\subsection{Random Shape Process}\label{ssec:rsp}
The random shape process starts with some initial Roth curve, specified by an initial set of control points, $c^{(0)}$.
From here on, we will refer to all curves by the stacked vector of
their control points, $c$. Then, drawing on the deformation and
degree-elevation operations defined earlier, we repeatedly apply the
following recursive operation $R$ times:
\begin{eqnarray}\label{eq:cprior}
\widehat{c}^{(r-1)}  =  E_{r}c^{(r-1)}, \quad d^{(r)} \sim  \mbox{N}(\mu_{r},\Sigma_{r}), \quad c^{(r)} =  \widehat{c}^{(r-1)}+T_{r}(c^{(r-1)})d^{(r)}
\end{eqnarray}
resulting in a final curve $c^{(R)}$. In other words, the process simply
has two steps: (i) degree elevate the current curve, (ii) randomly deform
it, and repeat a total of $R$ times. Note that this random process specifies a probability distribution over $c^{(R)}$.

\begin{figure}[h]
\centerline{\includegraphics[width=1\textwidth]{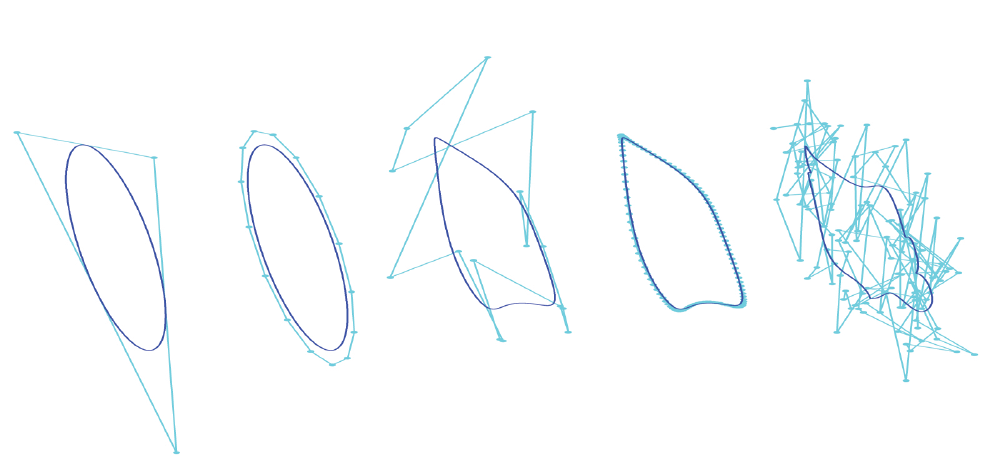}}
\caption{An illustration of the shape generation process. {\bf From left to right:} 1) initial curve specified by three control points, 2) the same curve after degree elevation, 3) deformation, 4) degree elevation again, 5) deformation again. Dark lines indicate the curve, pale dots indicate the curve's control points, and pale lines connect the control points in order. \label{fig:perturb} }
\end{figure} 

We now elaborate on the details of this recursive process. The parameters of the process are:
\begin{enumerate}
\item $R\in\mathbb{Z}$, the number of steps in the process
\item $n_{r}\in\mathbb{Z}$, the degree of the curve $c^{(r)}$, for each
$r=1,\ldots,R$. The sequence of $\{n_{r}\}_{1}^{R}$ must be strictly
monotonically increasing. For convenience, we will denote the number
of control points at a certain step $r$ to be $J_{r}=2n_{r}+1$.
\item $\mu_{r}\in\mathbb{R}^{2J_{r}}$, the average set of deformations applied
at step $r=1,\ldots,R$. Note that this vector contains a stack of
deformations, not just one.
\item $\Sigma_{r}\in\mathbb{R}^{2J_{r}\times2J_{r}}$, the covariance in
the set of deformations applied at step $r=1,\ldots,R$.
\end{enumerate}
According to these parameters, $E_{r}$ is then the degree-elevation
matrix going from degree $n_{r-1}$to $n_{r}$, $\mbox{N}(\cdot, \cdot)$ is a $2J_{r}$-variate
normal distribution and $T_{r}$ is the stacked deformation orienting
matrix.

We take special care in defining the initial curve, $c^{(0)}$. We
choose $c^{(0)}$ to be degree $n_{0}=1$, which guarantees that it
is an ellipse. For $j=1,2,3$, we define each control point as:  
\begin{eqnarray*}
c_{j}^{(0)} & = & (0,0)'+R_{\theta_{j}}d_{j}^{(0)},\\
R_{\theta_{j}} & = & \mbox{rotation matrix where }\theta_{j}=\frac{2\pi j}{3},
\end{eqnarray*}
and where each $d_{j}^{(0)}\in\mathbb{R}^{2}$ is a random deformation
vector. In words: we start with a curve that is just a point at the
origin, $C(t) \equiv (0,0)$, and apply three random deformations which are rotated by a radially symmetric amount: $0^\circ,120^\circ$ and $240^\circ$ (note that the final deformations are not radially symmetric, since each $d_j$ is randomly drawn). We will write this in vector notation as:
\begin{eqnarray*}
d^{(0)} & \sim & \mbox{N}(\mu_{0},\Sigma_{0})\\
c^{(0)} & = & {\bf 0}+T_{0}d^{(0)}
\end{eqnarray*}
The deformations essentially `inflate' the curve into some ellipse. This completes our definition of the random shape process.

We now give some intuition about the process and each of its parameters,
and define several additional concepts which make the process easier
to interpret. The random shape process gives a multiscale representation
of shapes, because each step in the process produces increasingly
fine-scale deformations, through degree-elevation.  

$R$ is then the number of scales or `resolutions' captured by the
process. Each $n_{r}$ specifies the number of control points at resolution
$r$. We will use $\mathbb{S}_{r}$ to denote the class of shapes
that can be exactly represented by a degree $n_{r}$ Roth curve.  See the definition of $\mathbb{H}^{n_r}$ in \S \ref{sec:theory} for 
a formal characterization of a special case of $\mathbb{S}_{r}$ . If $\{n_{r}\}_{1}^{R}$
is monotonically increasing, then $\mathbb{S}_{1}\subset \mathbb{S}_{2} \subset \ldots \subset \mathbb{S}_{R}$.
Thus, the deformations $d^{(r)}$ roughly describe the additional
details gained going from $\mathbb{S}_{r-1}$ to $\mathbb{S}_{r}$.

$\mu_{r}$ is the mean deformation at level $r$. Based on $\{\mu_{r},r=0,\ldots,R\}$,
we define the `central shape' of the random shape process, $c^{*}$ as:
\begin{eqnarray*}
c^{*} & := & c^{*(R)}\\
c^{*(r)} & = & E_{r}c^{*(r-1)}+T_{r}(c^{*(r-1)})\mu_{r}
\end{eqnarray*}

Note that $c^{*}$ is simply the deterministic result of the random
shape process when each $d^{(r)}=\mu_{r}$, rather than being drawn
from a distribution centered on $\mu_{r}$. Thus, all shapes generated by the process
tend to be deformed versions of the central shape. We illustrate this in Figure \ref{fig:moonstar}.
If the random shape process is used to describe a population of shapes,
the central shape provides a good summary.

\begin{figure}[h]
        \begin{tabular}{cc}
                \includegraphics[width=0.5\textwidth]{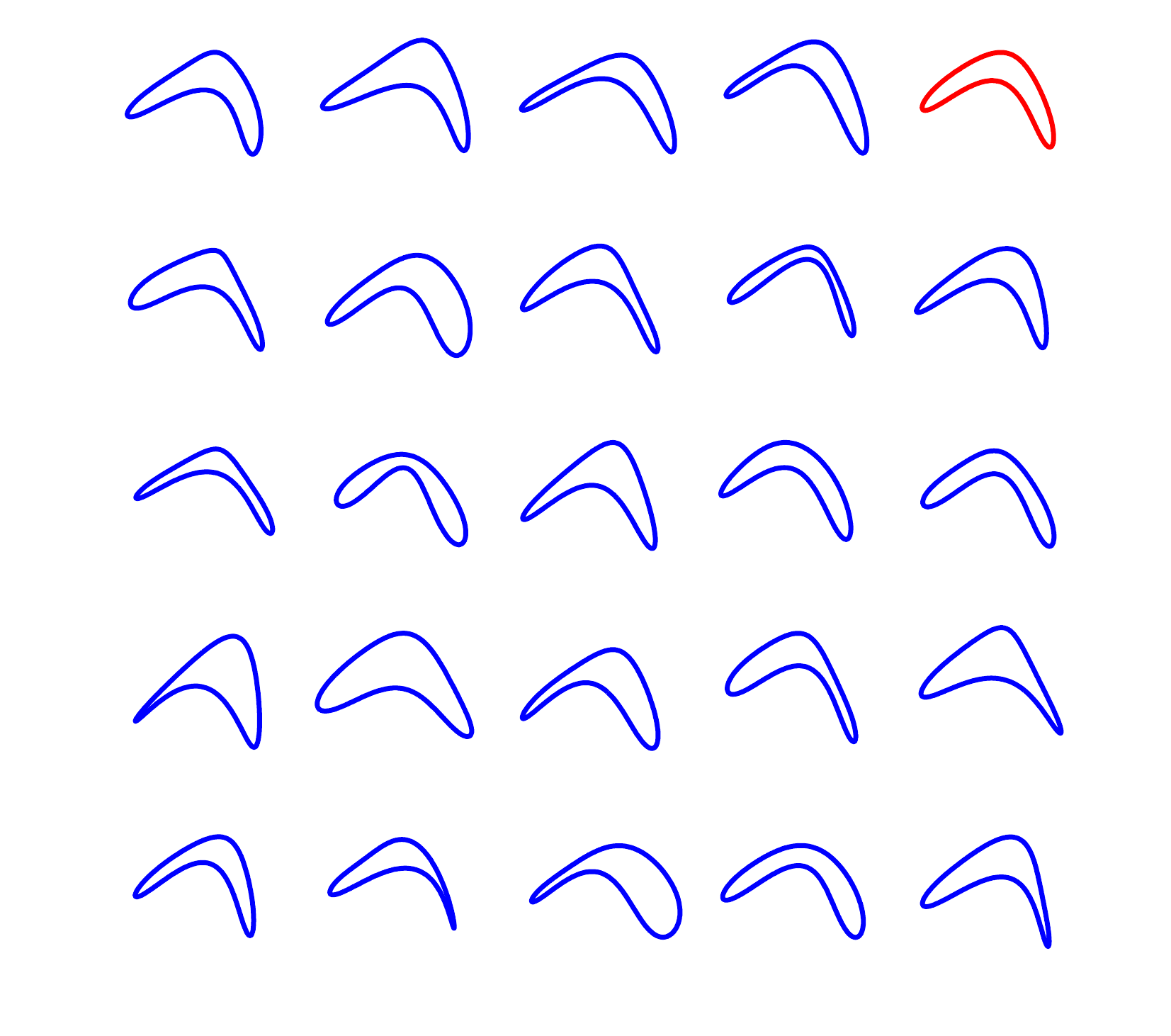} &  \includegraphics[width=0.5\textwidth]{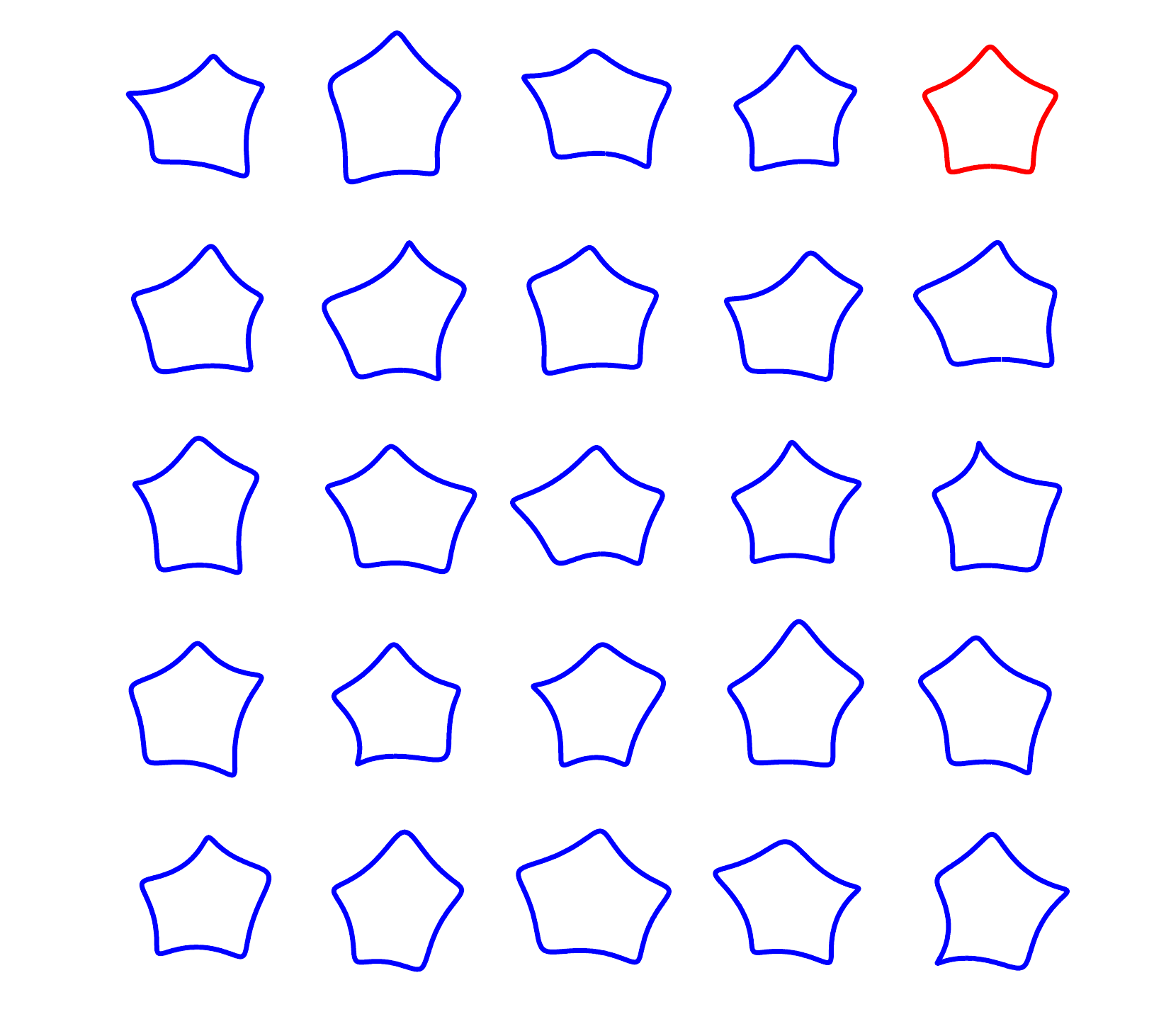}
        \end{tabular}
\caption{Realizations from the random shape process. The left panel shows realizations (in blue) when the central shape is a moon (shown in red). The right panel shows a similar case for a star.}
\label{fig:moonstar}
\end{figure}

$\Sigma_{r}$ determines the covariance of the deformations at level
$r$. This naturally controls the variability among shapes generated
by the process. If the variance is very small, all shapes will be
very similar to the central shape. $\Sigma_{r}$ can also be chosen to
induce correlation between deformation vectors at the same resolution,
in the typical way that correlation is induced between dimensions
of a multivariate normal distribution. This allows us to incorporate
higher-level assumptions about shape, such as reflected or radial
symmetry. For example, if $R=2,n_{1}=1$ and $n_{2}=2$, we can specify
perfect correlation in $\Sigma_{2}$, such that $d_{1}^{(2)}=d_{4}^{(2)}$
and $d_{2}^{(2)}=d_{3}^{(2)}$. The resulting shapes are guaranteed
to be symmetrical along an axis of reflection.

In the subsequent sections \ref{sec:modelprior} and \ref{sec:imagefit}, we show how to use our random shape process
to guide curve-fitting for various types of data. When doing so, we would like each resolution $r$ to describe the
shape as best as possible, within its class $\mathbb{S}_{r}$. This
can be achieved by setting $\mu_{r}={\bf 0}$ for $r=1,\ldots,R$
(not including $\mu_{0}$). 

%
%
%

\section{Properties of the Prior}\label{sec:theory}
\subsection{General notations}
 The supremum and $\mbox{L}_1$-norm are denoted by $||\cdot||_{\infty}$ and $||\cdot||_{1}$, respectively. We let $||\cdot||_{p, \nu}$ denote the norm of
$L_p(\nu)$, the space of measurable functions with $\nu$-integrable $p$th absolute power. The notation $C(\mathcal{X})$ is used for the space of continuous functions $f : \mathcal{X} \rightarrow \mathbb{R}$ endowed with the uniform norm. For $\alpha >0$ , we let $C^{\alpha}(\mathcal{X})$ denote the H\"{o}lder space of order $\alpha$, consisting of the functions $f \in C(\mathcal{X})$ that have $\lfloor \alpha \rfloor$ continuous
derivatives  with the $\lfloor \alpha \rfloor$th derivative
$f^{\lfloor \alpha \rfloor}$ being Lipshitz continuous of order $\alpha -\lfloor \alpha \rfloor$. 
We write ``$\precsim$'' for inequality up to a constant multiple and $\{a_{(1)}, a_{(2)}, \ldots, a_{(n)}\}$ to denote the order statistics of the set $\{a_i: a_i\in \mathbb{R} , i=1,\ldots,n\}$. 
\subsection{Support}
Let the H\"{o}lder class of periodic functions on $[-\pi, \pi]$ of order $\alpha$ be denoted by $C^{\alpha}([-\pi, \pi])$. Define the class of closed parametric curves $\mathcal{S}_{\mathcal{C}}(\alpha_1, \alpha_2)$ having different smoothness along different coordinates as
\begin{eqnarray}
\mathcal{S}_{\mathcal{C}}(\alpha_1, \alpha_2) := \{ S=(S^1, S^2): [-\pi, \pi] \to \mathbb{R} ^2, S^{i}  \in  C^{\alpha_i}([-\pi, \pi]), i=1,2\}.
\end{eqnarray}
Consider for simplicity a single resolution Roth curve with control points $\{c_j, j=0, \ldots, 2n\}$.   Assume we have independent Gaussian priors on each of the two coordinates of  $c_j$ for $j=0, \ldots, 2n$, i.e.,  $C(t) =\sum_{j=0}^{2n}c_j B_j^n(t), c_j \sim \mbox{N}_2(0, \sigma_j^2I_2), j=0, \ldots, 2n$.  Denote the prior for $C$ by $\Pi_{C^{n}}$.   $\Pi_{C^{n}}$ defines an independent Gaussian process for each of the components of $C$.  Technically speaking, the support of a prior is defined as the smallest closed set with probability one.  Intuitively, the support characterizes the variety of prior realizations along with those which are in their limit.  We construct a prior distribution to have large support so that the prior realizations are flexible enough to approximate the true underlying target object.  As reviewed in \citeasnoun{van2008reproducing},  the support of a Gaussian process (in our case $\Pi_{C^n}$) is the closure of the corresponding reproducing kernel Hilbert space (RKHS).  The following Lemma  \ref{lem:RKHS} describes the RKHS of $\Pi_{C^n}$,  which is a special case of Lemma 2 in \citeasnoun{pati2011surface}.

\begin{lemma}\label{lem:RKHS}
The RKHS $\mathbb{H}^{n}$ of $\Pi_{C^{n}}$ consists of all functions $h:[-\pi, \pi] \to \mathbb{R}^2$ of the form
\begin{eqnarray}
h(t)= \sum_{j=0}^{2n}c_{j}B_{j}^{n}(t)
\end{eqnarray}
where the weights $c_{j}$ range over $\mathbb{R} ^2$. The RKHS norm is given by
\begin{eqnarray}
||h||_{\mathbb{H}^{n}}^2 = \sum_{j=0}^{2n}||c_{j}||^2/ \sigma_j^2.
\end{eqnarray}
\end{lemma}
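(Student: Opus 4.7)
The plan is to recognise that $C$ is a Gaussian process with a finite-dimensional parametrisation, so the associated covariance kernel is of degenerate Mercer type and the RKHS is a finite-dimensional Hilbert space with a completely explicit description. Since the two coordinates of $C$ are independent copies of the same scalar Gaussian process, I will first identify the RKHS of one coordinate and then take the Cartesian product to obtain $\mathbb{H}^{n}$.

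First I would compute the covariance kernel of a single coordinate, say $C^{1}(t)=\sum_{j=0}^{2n}c_{j,x}B_{j}^{n}(t)$. Independence of the $c_{j,x}$ together with $\mathrm{Var}(c_{j,x})=\sigma_j^2$ yields
\[
K(s,t)=\mathrm{Cov}\!\bigl(C^{1}(s),C^{1}(t)\bigr)=\sum_{j=0}^{2n}\sigma_{j}^{2}\,B_{j}^{n}(s)B_{j}^{n}(t).
\]
This is a finite-rank kernel of the form $\sum_j\lambda_j\phi_j(s)\phi_j(t)$, for which the standard RKHS characterisation applies: the native space is the linear span of $\{B_{j}^{n}\}_{j=0}^{2n}$, endowed with the inner product making $\{\sigma_{j}B_{j}^{n}\}$ an orthonormal basis. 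Hence any $h(t)=\sum_{j}c_{j}B_{j}^{n}(t)$ lies in the scalar RKHS with $\|h\|^{2}=\sum_{j=0}^{2n}c_{j}^{2}/\sigma_{j}^{2}$, provided the $B_{j}^{n}$ are linearly independent. Linear independence is inherited from Property 5 of \S \ref{ssec:rothcurve}: the $2n+1$ Roth basis functions span the $(2n+1)$-dimensional space of trigonometric polynomials of degree at most $n$, so they form a basis.

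Finally I would lift this to the vector-valued setting. Because $C^{1}$ and $C^{2}$ are independent copies of the scalar process, the RKHS of the $\mathbb{R}^{2}$-valued process is the product RKHS, consisting of pairs $h=(h^{1},h^{2})$ with each component in the scalar RKHS and with squared norm equal to the sum of the coordinate squared norms. Writing $h^{i}(t)=\sum_{j}c_{j,i}B_{j}^{n}(t)$ and $c_{j}=(c_{j,x},c_{j,y})'$, this gives
\[
\|h\|_{\mathbb{H}^{n}}^{2}=\sum_{j=0}^{2n}\bigl(c_{j,x}^{2}+c_{j,y}^{2}\bigr)\big/\sigma_{j}^{2}=\sum_{j=0}^{2n}\|c_{j}\|^{2}/\sigma_{j}^{2},
\]
which is the stated formula. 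No step is deep: the only genuine content beyond routine Gaussian-process bookkeeping is the linear independence of the $B_{j}^{n}$, which is precisely where Property 5 is used. The rest is a direct specialisation of Lemma 2 of \citeasnoun{pati2011surface} from a bivariate tensor-product surface model to the univariate periodic setting considered here, so I expect the main ``obstacle'' to be merely careful translation of indices and the observation that periodicity on $[-\pi,\pi]$ plays no essential role in the argument.
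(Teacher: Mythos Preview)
Your proposal is correct and in fact more detailed than what the paper itself supplies. The paper does not give an independent argument: it simply states before the lemma that it ``is a special case of Lemma 2 in \citeasnoun{pati2011surface}'' and the appendix entry for the proof is left blank. Your write-up spells out exactly how that specialisation goes --- computing the finite-rank covariance kernel, invoking the linear independence of the $B_j^n$ via Property~5 to make the coefficient map well defined, and then taking the product RKHS for the two independent coordinates --- so you are not doing anything different in spirit, just filling in the bookkeeping the paper defers to the reference.
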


The following theorem describes how well an arbitrary closed parametric surface $S_0 \in \mathcal{S}_{\mathcal{C}}(\alpha_1, \alpha_2)$  can be approximated by the elements of $\mathbb{H}^{n}$ for each $n$.  Refer to Appendix \ref{app:results} for a proof. 

\begin{theorem}\label{thm:approx}
For any fixed $S_0 \in \mathcal{S}_{\mathcal{C}}(\alpha_1, \alpha_2)$, there exists $h \in \mathbb{H}^{n}$ with $||h||^2_{\mathbb{H}^{n}} \leq K_1\sum_{j=0}^{2n}1/\sigma_j^2$ such that
\begin{eqnarray}
|| S_0 - h||_{\infty} \leq K_2n^{-\alpha_{(1)}}\log n 
\end{eqnarray}
for some constants $K_1, K_2 > 0$ independent of $n$.
\end{theorem}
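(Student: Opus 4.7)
The plan is to decompose the problem coordinate-wise and then invoke classical one-dimensional trigonometric approximation. Write $S_0 = (S_0^1, S_0^2)$ and, for a candidate $h = \sum_{j=0}^{2n} c_j B_j^n \in \mathbb{H}^n$ with $c_j = (c_j^{(1)}, c_j^{(2)})'$, observe that $h = (h^1, h^2)$ where $h^i = \sum_j c_j^{(i)} B_j^n$ is a scalar trigonometric polynomial in the Roth basis. Then $\|S_0 - h\|_\infty \le \max_i \|S_0^i - h^i\|_\infty$ and $\|h\|_{\mathbb{H}^n}^2 = \sum_{j,i} (c_j^{(i)})^2/\sigma_j^2$. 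The single-coordinate subproblem becomes: for periodic $f \in C^{\alpha_i}([-\pi,\pi])$, find $h^i$ in the span of $\{B_j^n\}_{j=0}^{2n}$ with uniform error at most $K n^{-\alpha_i}\log n$ and per-control-point coefficient $|c_j^{(i)}| \le K'$ uniformly in $j$ and $n$. Combining the two coordinates and taking the worst Hölder exponent gives the $n^{-\alpha_{(1)}}\log n$ rate together with the RKHS bound $K_1 \sum_j 1/\sigma_j^2$.

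The next step is to pin down the structure of the Roth basis inside the $(2n+1)$-dimensional space $\mathcal{T}_n$ of real trigonometric polynomials of degree $\le n$. Using $(1+\cos s)^n = 2^n \cos^{2n}(s/2)$ and the binomial theorem, each $B_j^n$ is a degree-$n$ trig polynomial whose Fourier coefficient at mode $m$ equals $\beta_m = (n!)^2/[(2n+1)(n-|m|)!(n+|m|)!]$. Because the $B_j^n$ are rotations of a single function by the equispaced amounts $q_j = -2\pi(j-1)/(2n+1)$, the change-of-basis between the Roth basis and the Fourier basis is circulant and diagonalized by the discrete Fourier transform, which together with a dimension count shows that $\{B_j^n\}_{j=0}^{2n}$ is a basis of $\mathcal{T}_n$. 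Thus any trigonometric polynomial --- in particular a Jackson-type best approximation of $f$ --- admits a unique representation in the Roth basis.

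The main obstacle lies in controlling those Roth coefficients. The circulant change-of-basis matrix has eigenvalues proportional to $\binom{2n}{n-|m|}/\binom{2n}{n}$, which decay super-polynomially for $|m|$ comparable to $n$; consequently the naive trigonometric interpolant of $f$ at the nodes $q_j$ can produce Roth coefficients that blow up, even though $f$ itself is bounded. The fix is to first smooth $f$ by a trigonometric convolution operator (a de la Vallée-Poussin or Fejér-type sum, or equivalently a filtered Fourier partial sum), chosen so that the resulting trig polynomial is concentrated on modes for which the inverse circulant operator is well-conditioned. One would then show that this smoothed approximation both retains a Jackson rate $n^{-\alpha_i}$ (up to the logarithmic factor, which enters either through the Lebesgue constant of equispaced trigonometric interpolation or through a tail bound for periodic Hölder functions) and has bounded Roth coefficients, yielding $|c_j^{(i)}| \lesssim \|S_0^i\|_\infty$ uniformly in $j$ and $n$.

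Putting the two coordinates together gives $h = (h^1, h^2) \in \mathbb{H}^n$ with $\|S_0 - h\|_\infty \le K_2 n^{-\alpha_{(1)}}\log n$ and, since $\|c_j\|^2 = (c_j^{(1)})^2 + (c_j^{(2)})^2$ is uniformly bounded in $j$, $\|h\|_{\mathbb{H}^n}^2 = \sum_j \|c_j\|^2/\sigma_j^2 \le K_1 \sum_{j=0}^{2n} 1/\sigma_j^2$, as claimed. The delicate point throughout is the simultaneous control of approximation rate and coefficient magnitude in a non-orthogonal basis whose Gram matrix is severely ill-conditioned in the high-frequency directions.
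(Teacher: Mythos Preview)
Your strategy is exactly the paper's: reduce to each coordinate, use that the Roth basis spans the space $\mathcal{T}_n$ of trigonometric polynomials of degree at most $n$, produce a Jackson-type approximant with uniformly bounded Roth coefficients, and then read off the RKHS-norm bound from Lemma~\ref{lem:RKHS}. The only substantive difference is at the key step---simultaneous control of the approximation rate and of the coefficient magnitudes in the Roth basis. The paper simply invokes \citeasnoun{stepanets1974approximation} to assert the existence of $h^i = \sum_j c_j^{(i)} B_j^n$ with $|c_j^{(i)}|\le M_i$ and $\|S_0^i-h^i\|_\infty \le K_i n^{-\alpha_i}\log n$, whereas you try to engineer this directly via a filtered partial sum (de~la~Vall\'ee-Poussin/Fej\'er) designed to damp the high-frequency modes where the Fourier--Roth change of basis is ill-conditioned. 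Your diagnosis of the obstacle (the eigenvalues $\binom{2n}{n-|m|}/\binom{2n}{n}$ of the circulant change-of-basis collapse for $|m|\sim n$) is correct and is precisely what the citation hides; your construction is the natural way to prove that lemma from scratch, but as written it remains a sketch (``One would then show\ldots'') at the step that carries all the weight.
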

This shows that the Roth basis expansion is sufficiently flexible to approximate any closed curve arbitrarily well.  Although we have only shown large support of the prior under independent Gaussian priors on the control points, the multiscale structure should be even more flexible and hence rich enough to characterize any closed curve.  We can also expect minimax optimal posterior contraction rates using the prior $\Pi_{C^{n}}$ similar to Theorem 2 in \citeasnoun{pati2011surface} for suitable choices of prior distributions on $n$.

\subsection{Influence of the control points}\label{ssec:influence}
The unique maximum of basis function $B_j^n(t)$ defined in (\ref{eq:rothcurve}) is at $t =-2\pi (j-1) /J$, therefore the control point $c_j$ has the most significant effect on the shape of the curve in the neighborhood of the point $C(-2\pi (j-1)/J)$. Note that $B_j^n(t)$ vanishes at $t =  \pi - 2\pi(j-1)/J$, thus $c_j$ has no effect on the corresponding point i.e., the point of the curve is invariant under the modiÞcation of $c_j$.  The control point $c_j$ affects all other points of the curve, i.e. the curve is globally controlled. These properties are illustrated in Figure \ref{fig:influence}.

 \begin{figure}
 \begin{center}
\includegraphics[scale=0.3]{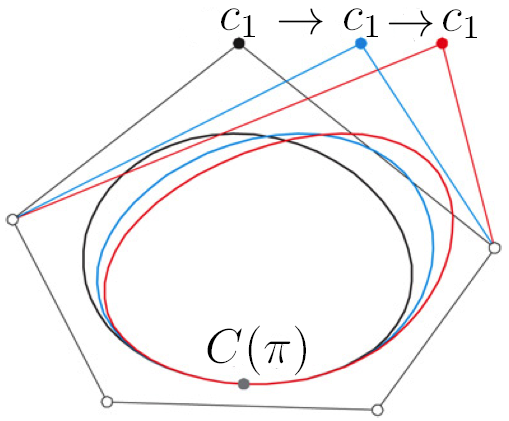}
\end{center}
\caption{Influence of the control point on the Roth curve}
\label{fig:influence}
\end{figure}

However, we emphasize following Proposition 5 in \citeasnoun{róth2009cyclic} that while control points have a global effect on the shape, this inßuence tends to be local and dramatically decreases on further parts of the curve, especially for higher values of $n$. 
%

\section{Inference from Point Cloud Data} \label{sec:modelprior}
We now demonstrate how our multiscale closed curve process can be used as a prior distribution for fitting a shape to a 2D point cloud. As a byproduct of fitting, we also obtain an intuitive description of the shape in terms of deformation vectors.
\noindent

Assume that the data consist of points $\{ p_i \in \Re^2, i=1,\ldots, N \}$ concentrated near a 2D closed curve.  Since a Roth curve can be thought of as a function expressing the trajectory of a particle over time, we view each data point, $p_i$, as a noisy observation of the particle's location at a given time $t_i$,
\begin{eqnarray}\label{eq:factor}
p_i = C(t_i) + \epsilon_i, \quad  \epsilon_i \sim N_2(0, \sigma^2 I_2) 
\end{eqnarray}
(\ref{eq:factor}) shares a similar form to nonlinear factor models, where $t_i$ is the latent factor score. %
We start by specifying the likelihood and prior distributions conditionally on the $t_i$s. We now rewrite the point cloud model in stacked vector notation. Defining
\begin{eqnarray*}
p =  (p_{1,x}, p_{1,y}, \ldots, p_{N,x}, p_{N,y} )',  &\quad& \epsilon = (\epsilon_{1,x}, \epsilon_{1,y}, \ldots, \epsilon_{N,x}, \epsilon_{N,y} )'  \\
t = (t_{1,x}, t_{1,y}, \ldots, t_{N,x}, t_{N,y})' , &\quad& \mathbb{X}(t)' =  [X(t_1)'  X(t_2)'  \ldots  X(t_N)']
\end{eqnarray*}
we have
\begin{eqnarray}\label{eq:modelmatrix}
p = \mathbb{X}(t) c + \epsilon, \quad
\epsilon \sim N_{2N}(0, \sigma^2 I_{2N})
\end{eqnarray}
where $X(t_i)$ is as defined in (\ref{eq:xmatrix}).

To fit a Roth curve through the data, we want to infer $P(c \mid p)$, the posterior distribution over control points $c$, given the data points $p$. To compute this, we must specify $P(p \mid c)$, the likelihood, and $P(c)$, the prior distribution over Roth curves specified by $c$. Refer to (\ref{eq:cprior}) in \S \ref{ssec:rsp} for a multiscale prior $P(c)$.  From (\ref{eq:modelmatrix}), we can specify the likelihood function as,
\begin{eqnarray}
P(\{p_i\}_1^N \mid\{c_i\}_1^J) &=& \displaystyle\prod_{i=1}^N N_{2}\bigg(p_i; \sum_{j=1}^J c_{j}B_{j}(t_i), \sigma^2 I_2\bigg), \quad 
P(p \mid c) = N_{2N}(p; \mathbb{X}(t) c,  \sigma^2 I_{2N}), \, \mbox{in vector notn.}
\end{eqnarray} 
This completes the Bayesian formulation for inferring $c$, given $p$ and $t$. In \S \ref{sec:postcomp}, we describe the exact method for performing Bayesian inference. As a byproduct of inference, we also infer the deformation vectors $d^{(r)}$ for , $r=1,\ldots,R$. Due to their multiscale organization, they may describe shape in a more intuitive manner than $\{c_j,j=1,\ldots,J\}$.

We propose a prior for $t_i$ conditionally on $c$, which is designed to be uniform over the curve's arc-length.  This prior is motivated by the frequentist literature on arc-length parameterizations \cite{madi2004closed}, but instead of replacing the points $\{ p_i \in \Re^2 \}$ with $\{ t_i \in [-\pi,\pi] \}$ in a deterministic preliminary step prior to statistical analysis, we use a Bayesian approach to formally accommodate uncertainty in parameterization of the points. 
Define the  arc-length function $A: [-\pi, \pi] \mapsto \mathbb{R}^+$ 
\begin{eqnarray}
A(u) := A(u; (c_0, \ldots, c_{2n})) = \int_{-\pi}^{u} ||H(t)||dt.
\end{eqnarray}
Note that $A$ is monotonically increasing and satisfies $A(-\pi) =0, A(\pi) = L(c_0, \ldots, c_{2n})$ where $L(c_0, \ldots, c_{2n})$ is the length of the curve conditional on the control points $(c_0, \ldots, c_{2n})$ and is given by $\int_{-\pi}^{\pi} ||H(t)||dt$.

Given $(c_0, \ldots, c_{2n})$, we draw $l_i \sim \mathrm{Unif}(0, L(c_0, \ldots, c_{2n}))$ and set $t_i = A^{-1}(l_i)$.  Thus we obtain a prior for the $t_i$'s which is uniform along the length of the curve.  We will discuss a novel griddy Gibbs algorithm for implementing the arc-length parametrization in a fully Bayesian framework in \S \ref{sec:postcomp}. 

\section{Inferences from Pixelated Image Data}\label{sec:imagefit}
In this section, we define a hierarchical Bayesian model for point cloud data concentrated near a 2d closed curve. We also show how image data  gives a bonus estimate for the object's surface orientation, $\omega_i$ at each point $p_i$. We incorporate this extra information into our model to obtain an even better shape fit, with essentially no sacrifice in computational efficiency.

A grayscale image can be treated as a function $Z: \mathbb{R}^2 \rightarrow  \mathbb{R}$. The gradient of this function, $\nabla Z:\mathbb{R}^2 \rightarrow \mathbb{R}^2$ is a vector field, where $\nabla Z(x,y)$ is a vector pointing in the direction of steepest ascent. In computer vision, it is well known that the gradient norm of the image, $||\nabla Z||_2: \mathbb{R}^2 \rightarrow  \mathbb{R}$ approximates a `line-drawing' of all the high-contrast edges in the image. Our goal is to fit the edges in the image with our shape model.

In practice, an image is discretized into pixels $\{z_{a,b} \mid a=1,\ldots,X,b=1,\ldots,Y\}$ but a discrete version of the gradient can still be computed by taking the difference between neighboring pixels, such that one gradient vector, $g_{a,b}$ is computed at each pixel. The image's gradient norm is then just another image, where each pixel $m_{a,b}=||g_{a,b}||_2$.

\noindent

Finally, we extract a point cloud: $\{(a,b) \mid m_{a,b} > M, a=1,\ldots,X,b=1,\ldots,Y\}$ where $M$ is some user-specified threshold. Each point $(a,b)$ can still be matched to a gradient vector $g_{a,b}$. For convenience, we will re-index them as $p_i$ and $g_i$. The gradient vector points in the direction of steepest change in contrast, i.e. it points across the edge of the object, approximating the object's surface normal. The surface orientation is then just $\omega_i = \arctan(\frac{g_{i,y}}{g_{i,x}})$.

In the following, we describe a model relating a Roth curve to each $\omega_i$. This model can be used together with the model we specified earlier for the $p_i$.

\subsection{Modeling surface orientation}
Denote by $v_i  = (H_x(t_i), H_y(t_i))\in \mathbb{R}^2$ the velocity vector of the curve $C(t)$ at the parameterization location $t_i, i =1, \ldots, N$. Note that $v_i$ is always tangent to the curve. Since each $\omega_i$ points roughly normal to the curve, we can rotate all of them by 90 degrees, $\theta_i = \omega_i + \frac{\pi}{2}$, and treat each $\theta_i$ as a noisy estimate of $v_i$'s orientation. Note that we cannot rotate the vector $g_i$ by 90 degrees and directly treat it as a noisy observation of $v_i$. In particular, $g_i$ 's magnitude bears no relationship to the magnitude of $v_i$: $||g_i||$ is the rate of change in image brightness when crossing the edge of the shape, while $||v_i||$ describes the speed at which the curve passes through $p_i$.

Suppose we did have some noisy observation of $v_i$, denoted $u_i$. Then, we could have specified the following linear model relating the curve $\{c_j,j=1,\ldots,J\}$ to the $u_i$'s:
\begin{eqnarray}\label{eq:surfacenormals}
u_i &=& v_i + \delta_i \\
&=& \sum_{j=1}^{J} c_j \frac{d}{dt}B_{j}(t_i) + \delta_i
\end{eqnarray}
for $i=1,\ldots, N$ where $\delta_i \sim N_{2}(0, \tau^2I_2)$. 
Instead, we only know the angle of $u_i$, $\theta_i$. In \S \ref{sec:postcomp}, we show that using this model, we can still write the likelihood for $\theta_i$, by marginalizing out the unknown magnitude of $u_i$. The resulting likelihood still results in conditional conjugacy of the control points.

\section{Fitting a population of shapes}\label{sec:pop}
We can easily generalize the methodology above to fit a collection of $K$ separate point clouds, and characterize the resulting population of shapes, represented by a closed curve. Continuing the vector notation earlier, we will represent the $k^{th}$ point cloud in a stacked vector $p^k$, the corresponding parametrizations $t^k$, surface orientations $\theta^k$, and the control points corresponding to that point cloud as $c^k$. Finally, we will denote the deformations which produce the curve for shape $k$ as $d^{(r),k}$ for each step $r$ in the shape process.

Up to this point, the parameters specifying each closed curve are separate and independent. This is sufficient if we just wish to fit each point cloud independently. However, now we aim to characterize all the curves as a single population. To do so, we treat each curve as an observation generated from a single random shape process.  We borrow information across the population of curves through sharing hyperparameters of our multiscale deformation model or by shrinking to a common value by assigning a hyperprior. These inferred hyperparameters and the uncertainty in estimating them  will effectively characterize the whole population of shapes. This is a hierarchical modeling strategy that is often used to characterize a population.

The hyperparameters of our random shape process are $\mu_r$ and $\Sigma_r$ for $r=1,\ldots,R$. We can treat each $\mu_r$ as an unknown and place the following prior on it:
\begin{eqnarray}
\mu_r &\sim& N_{2J_r}(\mu_{\mu_r}, \Sigma_{\mu_r})
\end{eqnarray}
By assuming that all shapes are generated from a single shape process with unknown $\mu_r$'s, we are basically assuming that all shapes are deformed versions of one `central shape' (defined earlier in \S \ref{ssec:rsp}) where the variability in deformation at scale $r$ is $\Sigma_r$.

Now suppose that each shape is rotated to a different angle. In this case, it may not be ideal to share {\it all} the deformations, because this would assume that all shapes are at exactly the same angle. One solution is to simply make the $\Sigma_1$ very large, allowing large variation in the $d^{(1),k}$. These deformations define the coarsest outline of each shape, which may be an ellipse. If these have wide variance, each coarse outline may be rotated to a different angle. Furthermore, since all subsequent deformations are defined {\it relative} to this initial coarse outline, different shapes can share the exact same deformations even if they are rotated to different angles.

Note that with this modification, all rotated shapes have essentially been aligned with each other, because all shape details expressed by the deformations for $r>1$ have been matched up.


\section{Posterior computation}\label{sec:postcomp} 
\subsection{An approximation to the deformation-orienting matrix for the deformation vector}
Observe that since $T_r(c^{(r),k})$ may not be linear in $c^{(r),k}$, due to the $\arctan$ in (\ref{eq:arctan}), the full conditional distribution of $c^{(r),k}$ is not conditionally conjugate. 
Below, we develop a novel approximation to the rotation matrix  $\tilde{T}_r$ of $T(c^{(r-1),k})$.  The approximation ensures that $T_r(c^{(r-1),k})$ is  linear in the level $r-1$ control points  $c^{(r-1),k}$ which results in conditional conjugacy of $c^{(r-1),k}$.  We resort to a Metropolis Hastings algorithm with an independent proposal suggested by the approximation to correct for the approximation error. For $j=1, \ldots, J_r$, let $q_{j,r}= -2\pi (j-1) / J_r$. 
 Recall that  $\dot{X}(q_{j,r}) c^{(r),k}$  is the hodograph evaluated at $q_{j,r}$. 
\begin{prop}
$R_j$ in (\ref{eq:rj}) can be  approximated by $\tilde{R}_j, j=1, \ldots, J_r$,  where $\tilde{R}_j$ are approximate rotation matrices given by 
\begin{eqnarray}
\frac{2\pi}{L_A} \times
\begin{bmatrix}
\dot{X}_{x}(q_{j,r}) c^{(r-1),k} & -\dot{X}_{y}(q_{j,r}) c^{(r-1),k}  \\ 
\dot{X}_{y}(q_{j,r}) c^{(r-1),k} & \dot{X}_{x}(q_{j,r})  c^{(r-1),k}
\end{bmatrix}
\end{eqnarray}
and $L_A$ is an approximation for the length of the curve $A(\pi; c^{(r-1),k})$ formed by the control points $c^{(r-1),k}$.
\end{prop}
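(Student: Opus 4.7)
The plan is to unpack the definition of $R_j$ in terms of the hodograph and then show that replacing the velocity magnitude $\|H(q_{j,r})\|$ by its arc-length-parameterization value $L_A/(2\pi)$ produces exactly the stated $\tilde{R}_j$.

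First I would substitute $\theta_j = \arctan(H_y(q_{j,r})/H_x(q_{j,r}))$ from (\ref{eq:arctan}) into the definition (\ref{eq:rj}) and apply the standard identities $\cos(\arctan(y/x)) = x/\sqrt{x^2+y^2}$ and $\sin(\arctan(y/x)) = y/\sqrt{x^2+y^2}$ (up to quadrant, which is handled by tracking signs of $H_x, H_y$). Combined with the vectorized hodograph (\ref{eq:xmatrix}), namely $H_x(q_{j,r}) = \dot{X}_x(q_{j,r}) c^{(r-1),k}$ and $H_y(q_{j,r}) = \dot{X}_y(q_{j,r}) c^{(r-1),k}$, this yields the exact representation
\[
R_j \;=\; \frac{1}{\|H(q_{j,r})\|}
\begin{bmatrix}
\dot{X}_x(q_{j,r})\, c^{(r-1),k} & -\dot{X}_y(q_{j,r})\, c^{(r-1),k}\\
\dot{X}_y(q_{j,r})\, c^{(r-1),k} & \dot{X}_x(q_{j,r})\, c^{(r-1),k}
\end{bmatrix}.
\]
Every entry of the matrix is linear in $c^{(r-1),k}$; only the scalar prefactor $1/\|H(q_{j,r})\|$ depends nonlinearly (through a square root) on the control points, so it is precisely this prefactor that must be linearized.

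Second, I would invoke an arc-length argument to replace the nonlinear prefactor. If $C$ on $[-\pi,\pi]$ were parameterized by arc length (rescaled), then $\|H(t)\|$ would be constant in $t$ and equal to $L(c^{(r-1),k})/(2\pi)$, where $L(c^{(r-1),k}) = A(\pi;c^{(r-1),k})$ is the total arc length defined in \S \ref{sec:modelprior}. Since the prior on the $t_i$s in \S \ref{sec:modelprior} is constructed to enforce exactly this arc-length parameterization, one may approximate $\|H(q_{j,r})\| \approx L_A/(2\pi)$, where $L_A$ is any computationally convenient surrogate for $L(c^{(r-1),k})$ (e.g.\ a Riemann-sum estimate evaluated at the current control points). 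Plugging this scalar back into the display above gives
\[
\tilde{R}_j \;=\; \frac{2\pi}{L_A}
\begin{bmatrix}
\dot{X}_x(q_{j,r})\, c^{(r-1),k} & -\dot{X}_y(q_{j,r})\, c^{(r-1),k}\\
\dot{X}_y(q_{j,r})\, c^{(r-1),k} & \dot{X}_x(q_{j,r})\, c^{(r-1),k}
\end{bmatrix},
\]
which is the expression asserted in the proposition and is linear in $c^{(r-1),k}$ once $L_A$ is held fixed. Conditional conjugacy of $c^{(r-1),k}$ then follows from the recursion (\ref{eq:cprior}), since $c^{(r),k} = E_r c^{(r-1),k} + \tilde{T}_r(c^{(r-1),k}) d^{(r),k}$ becomes affine in $c^{(r-1),k}$ for fixed $d^{(r),k}$.

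The main subtlety is the justification of the arc-length replacement: for a generic control-point configuration $\|H(t)\|$ is not truly constant in $t$, and the error is controlled only to the extent that the parameterization prior has equilibrated. This is precisely why the authors retain $\tilde{R}_j$ only as an independent Metropolis--Hastings proposal and correct for the discrepancy in the acceptance ratio. Because the proposition only claims that $\tilde{R}_j$ \emph{approximates} $R_j$, no explicit error bound is required; a quantitative version would require controlling $\max_j \bigl|\|H(q_{j,r})\| - L_A/(2\pi)\bigr|$ as a function of how close the sampled $t_i$s are to an arc-length grid, which is the technically hardest piece but is not needed for this statement.
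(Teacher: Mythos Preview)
Your proposal is correct and follows essentially the same route as the paper: expand $R_j$ via the $\cos/\sin$ of $\arctan$ identities to isolate the nonlinear factor $1/\|H(q_{j,r})\|$, then replace that magnitude by the constant-speed value $A(\pi;c^{(r-1),k})/(2\pi)$ justified by the uniform arc-length parameterization, and finally substitute the fixed surrogate $L_A$ for the arc length. Your write-up is in fact slightly more careful than the paper's (you note the quadrant issue and explicitly flag that the accuracy of the replacement hinges on how close the parameterization is to constant speed), but the argument is the same.
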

\begin{proof}
Refer to the definition of $R_j$ in (\ref{eq:rj}).  First we derive an approximation for $\cos(\theta_j^*)$ and $\sin(\theta_j^*)$ where $\theta_j^*$ is the angle at the point $q_{j,r}$ of the curve specified by $c^{(r-1),k}$.  We write $\theta_j^*$ in vector notation 
\begin{eqnarray}
 \theta_j^*= \arctan\bigg(\frac{\dot{X}_{y}(q_{j,r}) c^{(r-1),k}}{\dot{X}_{x}(q_{j,r}) c^{(r-1),k}}\bigg).
\end{eqnarray}
Using the identities 
\begin{eqnarray}
\cos \arctan(x/y) =  \frac{x}{\sqrt{x^2 + y^2}}, \quad \sin \arctan(x/y) =  \frac{y}{\sqrt{x^2 + y^2}}, 
\end{eqnarray}
we obtain, 
\begin{eqnarray*}
\cos(\theta_j^*) &=& \frac{\dot{X}_x (q_{j,r})  c^{(r-1),k}}{\sqrt{(\dot{X}_{x}(q_{j,r}) c^{(r-1),k})^2 + (\dot{X}_{y}(q_{j,r})  c^{(r-1),k})^2}} \\
\sin(\theta_j^*) &=& \frac{\dot{X}_{y}(q_{j,r}) c^{(r-1),k}}{\sqrt{(\dot{X}_{x} (q_{j,r})  c^{(r-1),k})^2 + (\dot{X}_{y}(q_{j,r})  c^{(r-1),k})^2}} \end{eqnarray*}

The magnitude $\sqrt{(\dot{X}_{x}(q_{j,r}) c^{(r-1),k})^2 + (\dot{X}_{y}(q_{j,r}) c^{(r-1),k})^2}$ of the velocity vector at the point $q_{j,r}$ can be well approximated by the quantity $\frac{A(\pi; c^{(r-1),k})} {2 \pi}$ in view of the uniform arc-length parameterization discussed in \S  \ref{sec:para}.  Hence
\begin{eqnarray*}
\cos(\theta_j^*)  \approx \frac{2 \pi}{A(\pi; c^{(r-1),k})} \dot{X}_{x}(q_{j,r})  c^{(r-1),k} \\
\sin(\theta_j^*)   \approx  \frac{2 \pi}{A(\pi; c^{(r-1),k})} \dot{X}_{y}(q_{j,r})  c^{(r-1),k} 
\end{eqnarray*}
Plugging in in a fixed approximation $L_A$  for the length of the curve $A(\pi; c^{(1),k})$, we obtain the required result. 
\qed
\end{proof}

\subsection{Conditional posteriors for $m$ and $d^{(r)}$} 
Before deriving the conditional posteriors, we first introduce some
simplifying notation. Recall from \S \ref{ssec:rsp} that
\begin{eqnarray*}
c^{(r)} & = & E_{r}c^{(r-1)}+T_{r}\left(E_{r}c^{(r-1)}\right)d^{(r)}
\end{eqnarray*}
Using the approximation $\hat{T}_{r}c^{(r-1)}\approx T_{r}(E_{r}c^{(r-1)})d^{(r)}$,
we then have $c^{(r)}\approx\left(E_{r}+\hat{T}_{r}\right)c^{(r-1)}$.
In the new arrangement, we can now cleanly write $c^{(R)}$ in terms
of the base case, $c^{(0)}$.
\begin{eqnarray*}
\Omega_{a}^{b} & = & \begin{cases}
\prod_{r=a}^{b}\left(E_{r}+\hat{T}_{r}\right) & \quad \text{if}\, a < b\\
1 & \quad \text{otherwise.}
\end{cases}
\end{eqnarray*}
Hence $c^{(R)}$ can be approximated by
\begin{eqnarray*}
c^{(R)} & \approx & \Omega_{1}^{R}c^{(0)}.
\end{eqnarray*}
Note that the terms in $\Omega_{r+1}^{R}$ are the source of approximation
error. Given this expression, we can easily write $c^{(R)}$
in terms of $m$ and $d^{(0)}$,
\begin{eqnarray*}
c^{(R)} & \approx & \Omega_{1}^{R}\left(m+T_{0}d^{(0)}\right).
\end{eqnarray*}
We can also write $c^{(R)}$ in terms of $c^{(r-1)}$ and $d^{(r)}$,
for any $r=1,\ldots,R$
\begin{eqnarray*}
c^{(R)} & \approx & \Omega_{r+1}^{R}\left[E_{r}c^{(r-1)}+T_{r}\left(c^{(r-1)}\right)d^{(r)}\right]
\end{eqnarray*}
Note that as $r$ approaches $R$, $\Omega_{r+1}^{R}$ involves fewer
factors and the amount of approximation error decreases.

We are now ready to derive the conditional posteriors for $m^{k}$
and $d^{(r)}$ (as in \S \ref{sec:imagefit}, we are using a superscript $k$ to
denote variables for the $k^{\mathrm{th}}$ shape). First, we claim
that all posteriors can be written in the following form for generic `$x$', `$y$' and `$z$'.
\begin{eqnarray}
P(x\mid-) & \propto & \mbox{N}\left(y;Qx,\Sigma_{y}\right)\ \mbox{N}\left(x;z,\Sigma_{x}\right) \label{eq:post1}\\
P(x\mid-) & \sim & \mbox{N}\left(\hat{\mu},\hat{\Sigma}\right) \label{eq:post2}\\
\hat{\Sigma^{-1}} & = & \Sigma_{x}^{-1}+\sum_{k}Q'\Sigma_{y}^{-1}Q \nonumber\\
\hat{\mu} & = & \hat{\Sigma}\left(\Sigma_{x}^{-1}z+\sum_{k}Q'\Sigma_{y}^{-1}y\right) \nonumber.
\end{eqnarray}
Note that each conditional posterior is simply a multivariate normal.
We now prove that each posterior can be rearranged to match the form
of (\ref{eq:post1}) - (\ref{eq:post2}). 
\begin{eqnarray*}
P(m^{k}\mid-) & \propto & \mbox{N}\left(p^{k};\mathbb{X}(t^{k})c^{(R),k},\sigma^2 I_{2N^k}\right)\ \mbox{N}(m^{k};\mu_{m},\Sigma_{m})\\
 & \propto & \mbox{N}\left(p^{k};\mathbb{X}(t^{k})\Omega_{1}^{R,k}\left(m^{k}+T_{0}d^{(0),k}\right),\sigma^2 I_{2N^k}\right)\ \mbox{N}(m^{k};\mu_{m},\Sigma_{m})\\
 & \propto & \mbox{N}\left(p^{k}-\mathbb{X}(t^{k})\Omega_{1}^{R,k}T_{0}d^{(0),k};\mathbb{X}(t^{k})\Omega_{1}^{R,k}m^{k},\sigma^2 I_{2N^k}\right)\ \mbox{N}(m^{k};\mu_{m},\Sigma_{m})
\end{eqnarray*}
\begin{eqnarray*}
P(d^{(r)}\mid-) & \propto & \mbox{N}\left(p^{k};\mathbb{X}(t^{k})c^{(R),k},\sigma^2 I_{2N^k}\right)\ \mbox{N}(d^{(r)};\mu_{r},\Sigma_{r})\\
 & \propto & \mbox{N}\left(p^{k};\mathbb{X}(t^{k})\Omega_{r+1}^{R}\left[E_{r}c^{(r-1)}+T_{r}\left(c^{(r-1)}\right)d^{(r)}\right],\sigma^2 I_{2N^k}\right)\ \mbox{N}(d^{(r)};\mu_{r},\Sigma_{r})\\
 & \propto & \mbox{N}\left(p^{k}-\mathbb{X}(t^{k})\Omega_{r+1}^{R}E_{r}c^{(r-1)};\mathbb{X}(t^{k})\Omega_{r+1}^{R}T_{r}\left(c^{(r-1)}\right)d^{(r)},\sigma^2 I_{2N^k}\right)\ \mbox{N}(d^{(r)};\mu_{r},\Sigma_{r})
\end{eqnarray*}

\subsection{Conditional update for $\sigma^2 I_{2N^k}^2=\tau_{p}^{-1}$}
\begin{eqnarray*}
P(\tau_{p}\mid-) & \propto & \prod_{k=1}^{K}\prod_{i=1}^{N_{k}}\mbox{N}\left(p_{i}^{k};p_{i}^{k}*,\tau_{p}^{-1}\mathbb{I}_{2}\right)\ Ga\left(\tau_{p};\alpha,\beta\right)\\
 & \propto & \tau_{p}^{N_{tot}}\exp\left[-\frac{1}{2}\tau_{p}\sum_{k=1}^{K}\sum_{i=1}^{N_{k}}(p_{i}^{k}-p_{i}^{k}*)'(p_{i}^{k}-p_{i}^{k}*)\right]\ \tau_{p}^{\alpha-1}\exp\left(-\beta\tau_{p}\right)\\
 & \propto & \tau_{p}^{\alpha+N_{tot}-1}\exp\left[-\left(\beta+\frac{1}{2}\sum_{k=1}^{K}\sum_{i=1}^{N_{k}}\|p_{i}^{k}-p_{i}^{k}*\|_{2}^{2}\right)\tau_{p}\right].
\end{eqnarray*}
Where $p_{i}^{k}*=X(t_{i}^{k})c^{(R),k}$ and $N_{tot}=\sum_{k=1}^{K}N_{k}$.
The conditional posterior distribution is then
\begin{eqnarray*}
P(\tau_{p}\mid-) & \sim & Ga\left(\hat{\alpha},\hat{\beta}\right)
\end{eqnarray*}
where 
\begin{eqnarray*}
\hat{\alpha}=\alpha+N_{tot}-1\qquad\hat{\beta}=\beta+\frac{{\displaystyle \sum_{k=1}^{K}\sum_{i=1}^{N_{k}}}\|p_{i}^{k}-p_{i}^{k}*\|_{2}^{2}}{2}
\end{eqnarray*}
%

\subsection{Likelihood contribution from surface-normals} 
 Define
\begin{eqnarray}
\dot{X}_{x}(t_i) &=& \left[\frac{d B_0^{n_1}(t_i)}{dt}, 0,  \frac{d B_1^{n_1}(t_i)}{dt}, 0, \cdots,  \frac{d B_{2n_1}^{n_1}(t_i)}{dt},  0 \right]\\
\dot{X}_{y}(t_i) &=&  \left[0, \frac{d B_0^{n_1}(t_i)}{dt}, 0,  \frac{d B_1^{n_1}(t_i)}{dt}, \cdots,  0, \frac{d B_{2n_1}^{n_1}(t_i)}{dt} \right]
\end{eqnarray}
%
%
%
\begin{prop}
The likelihood contribution of the tangent directions $\theta_i^k, i=1, \ldots, N^k$ ensures conjugate updates of the control points for a multivariate normal prior. 
\end{prop}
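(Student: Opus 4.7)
My plan is to represent the unobserved noisy tangent $u_i^k$ in the model (\ref{eq:surfacenormals}) as $u_i^k = \rho_i^k e_i^k$, where $e_i^k := (\cos\theta_i^k, \sin\theta_i^k)'$ and $\rho_i^k$ is its (signed) magnitude, and then marginalize $\rho_i^k$ with a flat reference measure on $\mathbb{R}$. Since the image gradient at a boundary pixel identifies the normal line but not a preferred oriented normal (the object can be darker or lighter than the background), allowing $\rho_i^k$ to take either sign is both natural and, as I will show, what delivers conjugacy.

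With $v_i^k := \dot{X}(t_i^k) c^{(R),k}$, I would decompose $v_i^k$ into components parallel and perpendicular to $e_i^k$, giving $\|\rho e_i^k - v_i^k\|^2 = (\rho - a_i^k)^2 + (b_i^k)^2$, where $a_i^k := (e_i^k)' v_i^k$ and $b_i^k := (e_i^{k,\perp})' v_i^k$ with $e_i^{k,\perp} := (-\sin\theta_i^k, \cos\theta_i^k)'$. The integral over the parallel component is a Gaussian integral constant in $c^{(R),k}$, leaving
\begin{eqnarray*}
p(\theta_i^k \mid c^{(R),k}) &\propto& \exp\left(-\frac{\left[(e_i^{k,\perp})' \dot{X}(t_i^k) c^{(R),k}\right]^2}{2\tau^2}\right),
\end{eqnarray*}
which is equivalent to a pseudo-observation $0 = (e_i^{k,\perp})' \dot{X}(t_i^k) c^{(R),k} + \xi_i^k$ with $\xi_i^k \sim \mbox{N}(0,\tau^2)$, and is quadratic in $c^{(R),k}$.

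Stacking the $N^k$ pseudo-observations into a matrix $\Psi^k$ whose $i$-th row is $(e_i^{k,\perp})' \dot{X}(t_i^k)$ yields the combined likelihood contribution $\mbox{N}(0; \Psi^k c^{(R),k}, \tau^2 I_{N^k})$, which slots directly into the generic conjugate form (\ref{eq:post1})--(\ref{eq:post2}) used for the other full conditionals. Via the linear multiscale reparameterization $c^{(R),k} \approx \Omega_1^{R,k}(m^k + T_0 d^{(0),k})$ and the analogous level-$r$ expressions, this contribution remains linear in whichever control-point vector is being updated, so conjugacy is preserved across the Gibbs sampler. The main subtlety I would flag explicitly is the choice of reference measure for $\rho_i^k$: using the natural polar Jacobian $|\rho|$ would instead produce a projected-normal density in $c^{(R),k}$ that is not Gaussian, so conjugacy hinges on treating $\rho_i^k$ as a real-line nuisance parameter, which is well-motivated by the orientation ambiguity of image gradients.
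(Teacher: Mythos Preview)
Your proposal is correct and follows essentially the same approach as the paper: represent $u_i^k$ as an unknown scalar times a unit (or direction) vector determined by $\theta_i^k$, place a flat improper prior on that scalar, and integrate it out to obtain a marginal likelihood that is Gaussian in $c^{(R),k}$. The paper parameterizes the scalar as the $x$-coordinate $m_i^k$ with $u_i^k=(m_i^k,m_i^k\tan\theta_i^k)$ rather than your signed radius $\rho_i^k$, but the two differ only by the Jacobian $|\cos\theta_i^k|$, which is constant in $c^{(R),k}$; your orthogonal decomposition $(b_i^k)^2=\big((e_i^{k,\perp})'v_i^k\big)^2$ is exactly the simplified form of the paper's quadratic $\,(c^{(R),k})'(T_i^k)'\Sigma_i^k T_i^k c^{(R),k}$.
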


\begin{proof}
Recall the noisy tangent director vectors $u_i^k$'s and $v_i^k$'s in (\ref{eq:surfacenormals}).   Use a simple reparameterization
\begin{eqnarray*}
u_i ^k= (m_i^k, m_i^k\tan\theta_i^k)
\end{eqnarray*}
where only $\theta_i'$s are observed and $m_i$'s aren't.  Observe that 
\begin{eqnarray}
v_i^k= (H_x(t_i), H_y(t_i)) =  (\dot{X}_{x}(t_i^k) c^{(3),k}, \dot{X}_{y}(t_i^k) c^{(3),k}).
\end{eqnarray}
Assuming a non-informative prior for the $m_i^k$'s on $\mathbb{R}$, the marginal likelihood of the tangent direction $\theta_i^k$ given $\tau^2$ and the parameterization $t_i^k$ is given by 
\begin{eqnarray*}
l(\theta_i^k) = \frac{1}{2\pi\tau^2}\int_{-\infty}^{\infty} \exp\left[-\frac{1}{2\tau^2}\{ (m_i^k - \dot{X}_{x}(t_i^k) c^{(3),k})^2 + (m_i^k\tan (\theta_i) - \dot{X}_{y}(t_i^k) c^{(3),k})^2 \}\right] dm_i^k
\end{eqnarray*}   
It turns out the above expression has a closed form given by 
\begin{eqnarray*}
 l(\theta_i^k) = \frac{1}{2\pi\tau^2}\frac{\sqrt{2\pi\tau^2}}{\sqrt{1+ \tan^{2}(\theta_i^k)}}\exp\left[  -\frac{1}{2\tau^2}\left\{(\dot{X}_{x}(t_i^k) c^{(3),k})^2 + (\dot{X}_{y}(t_i^k) c^{(3),k})^2 - \frac{(\dot{X}_{x}(t_i^k) c^{(3),k} + \dot{X}_{y}(t_i^k) c^{(3),k} \tan(\theta_i))^2}{1+\tan^2(\theta_i^k)}\right\} \right].  
\end{eqnarray*}
%
The likelihood for the $\{\theta_i^k, i=1, \ldots, N^k\}$ is given by 
\begin{eqnarray*}
L(\theta_1^k, \ldots, \theta^k_{N^k}) &\propto& \frac{1}{\tau^{N^k}}\exp\left[-\frac{1}{2\tau^2}\sum_{i=1}^{N^k}\frac{(\dot{X}_{x}(t_i^k) c^{(3),k})^2 \tan^2(\theta_i) + (\dot{X}_{y}(t_i^k) c^{(3),k})^2 -
2\dot{X}_{x}(t_i^k) c^{(1),k} \dot{X}_{y}(t_i^k) c^{(1),k}\tan(\theta_i)}{1+\tan^{2}(\theta_i)}\right]\\
&=& \frac{1}{\tau^{N^k}}\exp\left[-\frac{1}{2\tau^2} (c^{(3),k})'\left\{\sum_{i=1}^{N}(T_i^k)' \Sigma_i^kT_i^k \right\}c^{(3),k}\right]
\end{eqnarray*}
where  
\begin{eqnarray*}
\Sigma_i= \left( \begin{array}{cc}
\frac{\tan^{2}(\theta_i^k)}{1+\tan^{2}(\theta_i^k)} & \frac{-\tan(\theta_i^k)}{1+\tan^{2}(\theta_i^k)}  \\
\frac{-\tan(\theta_i^k)}{1+\tan^{2}(\theta_i^k)} & \frac{1}{1+\tan^{2}(\theta_i^k)}  \end{array} \right)
\end{eqnarray*}
and $T_i= [(\dot{X}_{x}(t_i^k) )' \quad (\dot{X}_{y}(t_i^k )']$ is a  $2(2n_3 +1) \times 2$ matrix. 
Clearly, an inverse-Gamma for $\tau^2$ and a multivariate normal prior for the control points are conjugate choices. \qed
\end{proof}

\subsection{Griddy Gibbs updates for the parameterizations $t_i^k$}
We discretize the possible values of $t_i^k \in [-\pi, \pi]$  to obtain a discrete approximation of its conditional posterior:
\begin{eqnarray*}
t_i^k \mid - \sim \frac{\mbox{N}(p_i^k;\mathbb{X}(t_i)'c^{(3),k}, \sigma^2 I_2)}{\sum_{\tau \in [-\pi, \pi]} \mbox{N}(p_i^k; \mathbb{X}(\tau)'c^{(3),k} ,\sigma^2 I_2)}
\end{eqnarray*}
We can make this arbitrarily accurate, by making a finer summation over $\tau$. 

\section{Simulation study}\label{sec:simstudy}

\section{Case study}\label{sec:realstudy}

\appendix
\section{Proofs of main  results}\label{app:results}
\noindent{Proof of Theorem \ref{thm:approx}:}
From \cite{stepanets1974approximation} and observing that the basis functions $\{B_{j}^n, j=0, \ldots, 2n\}$ span the vector space of trigonometric polynomials of degree at most $n$, it follows that given any $S_0^{i} \in C^{\alpha_i}([-\pi, \pi])$, there exists
$h^{i}(u) = \sum_{j=0}^{2n}c_{j}^{i}B_{j}^{n}(u)$, $h^{i}:[-\pi, \pi] \to \mathbb{R}$ with $|c_{j}^i| \leq M_i$, such that $||h^{i} - S_0^{i}||_{\infty} \leq K_in^{-\alpha_i}\log n$ for some constants $M_i, K_i > 0, i=1,2$.  Setting $h(u) = \sum_{j=0}^{2n}(c_{j}^{1}, c_{j}^{2})' B_{j}^{n}(u)$, we have
\begin{eqnarray*}
||h - S_0||_{\infty} \leq Mn^{-\alpha_{(1)}}\log n 
\end{eqnarray*}
with $||h||^{2}_{\mathbb{H}} \leq K \sum_{j=0}^{2n}\phi_{j}$
where $M=M_{(2)}, K=K_{(2)}$.

\noindent{Proof of Lemma \ref{lem:RKHS}:}

\bibliographystyle{ECA_jasa}
\bibliography{multicell_bib}

\end{document}